\newtheorem{thm}{Theorem}[section]
\newtheorem{lem}[thm]{Lemma}
\theoremstyle{definition}
\newtheorem{defn}[thm]{Definition}	
\theoremstyle{remark}
\def\beq{\begin{eqnarray}}
\def\eeq{\end{eqnarray}}
\def\bsp{\begin{split}}	
\def\esp{\end{split}}
\newcommand{\spc}[2]{ {\bar{#1}}^{\dot{#2}} }
\newcommand{\spcd}[2]{ {\bar{#1}}_{\dot{#2}} }
\newcommand{\be}{\begin{equation}}
\newcommand{\ee}{\end{equation}}
\newcommand{\z}{\zeta}
\newcommand{\bz}{\bar{\zeta}}
\begin{document}

\title{Spacetimes with all scalar curvature invariants in terms of a cosmological constant}

\begin{abstract} 
In this letter we provide an invariant characterization for all spacetimes with all polynomial scalar invariants constructed from the Riemann tensor and its covariant derivatives vanishing except those zeroth order curvature invariants  expressed as polynomials in $\Lambda$, the cosmological constant. Using this invariant description we provide explicit forms for the metric. 
\end{abstract} 

\maketitle

\begin{section}{Introduction}

Given a metric for a spacetime, one may construct scalar curvature invariants by contracting the curvature tensor with various copies of itself. Curvature invariants of order $n>1$ are produced by contracting polynomials of the curvature tensor with its covariant derivatives up to order $n$. Extending the arguments made in \cite{BicakPravda} the entire collection of spacetimes with vanishing scalar invariants were identified \cite{4DVSI} by using the GHP formalism and the boost-weight decomposition to define balanced scalars and balanced spinors which in turn produce tensors which vanish upon contraction. Spacetimes with this property are said to be $VSI$, these spacetimes are a subclass of the $CSI$ spacetimes in which all polynomial scalar curvature invariants are constant \cite{4DKundt}

It was noted that this approach could be extended to the case with non-zero cosmological constant, $\Lambda \neq 0$ \cite{4DVSI}. This produces a subclass of the $CSI$ spacetimes for which all scalar curvature invariants either vanish or are polynomials in terms of $\Lambda$, denoted as the $CSI_{\Lambda}$ spacetimes. These are of interest as they are a natural and simple step from the $VSI$ and $CSI$ spacetimes revealing the interconnection between the two. 

As an example consider the plane-fronted gravitational waves, which constitute the entirety of the Petrov type N $CSI_{\Lambda}$ spacetimes. These were originally derived by Kundt \cite{Kundt61} in 1961 with vanishing cosmological constant. At the time, this was a reasonable constraint as it produced the simplest pure radiation solutions admitting a twist-free and non-expanding null congruence. Although the plausibility of a non-vanishing cosmological constant had been considered in \cite{Schro}, it was not until the 1981 that the Petrov type N solutions with cosmological constant were determined \cite{GP, SGP} and the plane-fronted gravitational waves in spacetimes with $\Lambda \neq 0$ were identified \cite{Ozvath}. 

The resulting class of $KN(\Lambda)[\alpha, \beta]$ metrics were classified by the sign of the cosmological constant $\Lambda \neq 0$ and another invariant $\kappa' = \frac13 \Lambda \alpha^2 + 2 \beta \bar{\beta}$ arising from the metric,
\beq ds^2 &=&-2q^2p^{-2} du\left( \left( -\frac{\kappa'}{2}v^2 + (ln q)_{,u} v + S(u,\z,\bz) \right) du + dv \right) + 2p^{-2}d\z d\bz ,~~ \nonumber \\
p &=& 1 + \frac{\Lambda}{6} \z \bz, \label{ORR} \\
q &=& (1-\frac{\Lambda}{6} \z \bz)\alpha(u) + \z \bar{\beta}(u) + \z \beta(u). \nonumber \eeq  
\noindent Excluding, the $\Lambda =0$ cases, this produces four canonical classes, which were shown to have a canonical form by setting $\alpha$ and $\beta$ to specific values using the appropriate coordinate transformations \cite{Bicak}:
\begin{itemize}
 \item $\kappa ' >0$, $\Lambda >0:$ $KN(\Lambda^{+})[0,1]$ 
 \item $\kappa ' >0$, $\Lambda <0:$ $KN(\Lambda^{-})[0,1]$
 \item $\kappa ' <0$, $\Lambda <0:$ $KN(\Lambda^{-})[1,0]$
 \item $\kappa ' =0$, $\Lambda <0:$ $KN(\Lambda^{-})[1,\sqrt{\frac{-\Lambda}{6}}e^{i\omega(u)}]$.
\end{itemize}
\noindent The physical interpretation of each of these subclasses is examined in \cite{BicakPodolsky} using the equations of geodesic deviation relative to an arbitrary timelike geodesic;  these may be interpreted as exact transverse gravitational waves with two polarization modes propagating on either Minkowski, de Sitter or anti-de Sitter space. 

The plane-fronted  gravitational waves, with $\Lambda = 0$, belong to the $VSI$ class of spacetimes \cite{4DVSI}, by adding a non-vanishing cosmological constant we have produced four distinct classes of $CSI$ spacetimes. It is reasonable to ask how many new distinct $CSI$ spacetimes are produced by adding $\Lambda \neq 0$ to each of the $VSI$ spacetimes. In light of the results of  \cite{4DKundt} we may classify the above solutions by examining the Segre type and comparing to the metric forms in \cite{4DKundt}. 

These spacetimes are of interest in quantum gravity. In the case of the vacuum plane wave spacetimes, it was shown that the vanishing of all scalar curvature invariants lead to all quantum corrections vanishing \cite{Deser72, Gibbons75}. Spacetimes for which all quantum corrections are a multiple of the metric are called {\it universal} \cite{Gibbons2008}; such spacetimes are important as they are solutions to the quantum theory, despite our lack of knowledge of the particular theory. Recently it was proven that any universal spacetime in four dimensions must have constant scalar curvature invariants \cite{ColeyHervik}. The $CSI_{\Lambda}$ spacetimes are a special subcase of the $CSI$ universal spacetimes. 

The goal of this letter will be to derive necessary and sufficient conditions on the Newman-Penrose scalar \cite{PR} for the class of $CSI$ spacetimes with all non-zero scalar curvature invariants expressed in terms of the cosmological constant $\Lambda \neq 0$, as a parallel to the result in \cite{4DVSI}. In the following section we state the theorem and break up the proof of the necessity and sufficiency of the conditions into two subsections. The third section employs the invariant characterization of the $CSI_{\Lambda}$ spacetimes along with the exhaustive list of $CSI$ spacetimes to identify all of the metrics for the $CSI_{\Lambda}$ spacetimes. Finally in the last section we discuss the relevance of the $CSI_{\Lambda}$ spacetimes to the equivalence problem for Lorentzian manifolds.

\end{section}
\begin{section}{The $CSI_{\Lambda}$ Theorem}
We wish to provide a simple set of conditions for spacetimes in which the Ricci Scalar is constant, and the only curvature invariants which are non-zero are the zeroth order invariants expressed as various polynomials of the cosmological constant $\Lambda$. 
\begin{thm}\label{thm:1}
Given a spacetime, all invariants constructed from the traceless Ricci tensor, Weyl tensor and their covariant derivatives vanish, if and only if the following conditions are satisfied:
\begin{enumerate}
\item The spacetime possesses a non-diverging, shear-free geodesic null congruence. 
\item Relative to this congruence, the Ricci Scalar is constant and all other Newman Penrose curvature scalars \cite{PR} with non-negative boost-weight vanish.
\end{enumerate}
\noindent These spacetimes belong to the $CSI$ class of spacetimes and we will say they are $CSI_{\Lambda}$ spacetimes.
\end{thm}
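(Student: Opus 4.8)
The plan is to work entirely within the GHP/Newman--Penrose formalism, adapting the balanced-scalar machinery of \cite{4DVSI} to the present setting. I would fix a null frame $(\ell,n,m,\bar m)$ aligned with a candidate congruence $\ell$ and decompose the traceless Ricci tensor into its GHP components $\Phi_{00},\dots,\Phi_{22}$ and the Weyl tensor into $\Psi_0,\dots,\Psi_4$, sorted by boost weight, while keeping the Ricci scalar $R$ (equivalently $\Lambda$) as the sole boost-weight-zero trace piece. The operators $D=\ell^a\nabla_a$, $\Delta=n^a\nabla_a$, $\delta=m^a\nabla_a$ and $\bar\delta$ carry boost weights $+1,-1,0,0$ respectively, so among the four only $D$ can raise the boost weight of a GHP scalar. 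I would then prove the two implications separately.

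For sufficiency I assume conditions (1) and (2). Condition (1) translates, in GHP variables, into the vanishing of the optical scalars $\kappa=\sigma=\rho=0$; these are precisely the positive-boost-weight spin coefficients, and their vanishing is what prevents $D$ from generating positive-boost-weight terms when it acts on curvature. Condition (2) forces every non-negative-boost-weight component of $\Phi$ and $\Psi$ to vanish, leaving $R$ constant. The first step is to use the Bianchi identities in this Kundt background to show that the surviving negative-boost-weight curvature components are \emph{balanced} in the sense of \cite{4DVSI}, i.e. a component $\eta$ of boost weight $b<0$ satisfies $D^{-b}\eta=0$. The second step is the propagation lemma: balancedness is preserved by $\delta,\bar\delta,\Delta$ (which do not raise boost weight) and by $D$, because $D^{-b}\eta=0$ is exactly the statement that $D\eta$, of boost weight $b+1$, is again balanced, so every covariant derivative of $\Phi$ and $C$ is balanced and has no non-negative-boost-weight component. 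Finally, any polynomial invariant of $\Phi$, $C$ and their derivatives is boost-invariant, so only its boost-weight-zero monomials survive; each such monomial is a product of frame components whose strictly negative boost weights would have to sum to zero, which is impossible, so every such invariant vanishes. The remaining invariants of the full curvature then collapse to polynomials in the constant $\Lambda$, exhibiting the spacetime as $CSI$.

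For necessity I start from the algebraic content of the hypothesis. The vanishing of \emph{all} invariants of the combined object $\Phi\oplus C$ is an algebraic condition at each point; invoking the spinorial classification underlying \cite{BicakPravda, 4DVSI}, it implies that the associated curvature operator is nilpotent, i.e. there is a null direction $\ell$ relative to which every non-negative-boost-weight component of $\Phi$ and $\Psi$ vanishes, giving the algebraic half of condition (2). To obtain condition (1) I would feed this alignment into the Bianchi identities together with the requirement that invariants built from the covariant derivatives also vanish: were any of $\kappa,\sigma,\rho$ nonzero, differentiating the aligned curvature would reintroduce non-negative-boost-weight components and hence a nonvanishing boost-weight-zero invariant, so each optical scalar must vanish and $\ell$ is a non-diverging, shear-free, geodesic null congruence. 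Constancy of $R$ is recovered from the contracted Bianchi identity $\nabla^a\Phi_{ab}\propto\nabla_b R$, which expresses the gradient of the Ricci scalar through the covariant derivative of the traceless Ricci tensor; vanishing of the associated invariants, combined with the Kundt structure just established, then forces $R$ to be constant and completes condition (2).

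I expect the main obstacle to lie in the necessity direction, specifically in pinning down the optical scalars. One must track, boost weight by boost weight, how the Bianchi and Ricci (commutator) identities couple $\kappa,\sigma,\rho$ to the curvature components and their derivatives, and verify that a nonzero optical scalar genuinely survives into some nonvanishing scalar invariant rather than being hidden by cancellations. A secondary subtlety, absent in the positive-definite case, is that Lorentzian signature admits a null gradient: $\nabla_a R\,\nabla^a R=0$ does not by itself give $\nabla R=0$, so the constancy of $R$ and the vanishing of the optical scalars must be extracted from a system of invariants rather than from a single one.
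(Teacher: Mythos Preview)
Your approach is the paper's approach: balanced scalars in the GHP formalism, with necessity lifted from \cite{4DVSI}. Your necessity sketch is in fact more careful than the paper, which simply asserts that the \cite{4DVSI} argument goes through verbatim and does not engage with the null-gradient issue you raise.

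There is, however, a gap in your sufficiency argument, and it sits exactly where $\Lambda\neq0$ enters. You assert that balancedness is preserved by $\th'$ because it ``does not raise boost weight,'' but that is not the relevant point: to show $\th'\eta$ is balanced one must verify $\th^{\,1-b}(\th'\eta)=0$, which requires commuting $\th$ past $\th'$. Here the commutator reads $\th\th'-\th'\th=\bar\tau\,\eth+\tau\,\eth'+(p+q)\Lambda$, with a new $\Lambda$-term absent in the VSI case, and one has to check that the induction still closes (it does, because $\Lambda$ is constant and $\eta$ is balanced, so $\th^{-b}((p+q)\Lambda\,\eta)=0$). Likewise $\rho'$ is no longer $\th$-closed: $\th\rho'=-2\Lambda$, so the Leibniz argument showing $\rho'\eta$ is balanced must appeal to $\th^{2}\rho'=0$ rather than $\th\rho'=0$. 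These two modifications are the entire technical content of the sufficiency proof beyond \cite{4DVSI}; your phrasing suggests you regard the $\th'$ step as automatic, which it is not. A smaller omission: once $\Psi_{ABCD}$ and $\Phi_{AB\dot C\dot D}$ are shown to be balanced, the Riemann spinor still carries the non-balanced piece $\Lambda(\epsilon_{AC}\epsilon_{BD}+\epsilon_{AD}\epsilon_{BC})\bar\epsilon_{\dot A\dot B}\bar\epsilon_{\dot C\dot D}$, and one needs the easy observation that contractions of a balanced spinor against $\epsilon$'s and constant scalars still vanish, so that in any full contraction only the pure-$\epsilon$ terms survive and produce the polynomials in $\Lambda$.
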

\noindent We choose the tangent vector to the null congruence to be $\ell^a$ and a spin basis so that $o^A \bar{o}^{\dot{A}} \leftrightarrow \ell^a$. The analytic conditions of theorem \ref{thm:1} (1) for this spin basis is expressed in terms of the vanishing of the spin coefficients \cite{PR}:
\beq \kappa = \rho = \sigma = 0 \label{krs}, \eeq
\noindent and the second condition of theorem \ref{thm:1} may be expressed as
\beq & \Psi_0 = \Psi_1 = \Psi_2 = 0, & \label{psi0} \\
& \Phi_{00} = \Phi_{01} = \Phi_{02} = \Phi_{11} = 0 &\label{phi0} \\
& \Lambda \equiv constant & \label{LambdaC} \eeq
Following the work done for $VSI$ spacetimes, the definitions and results given in the Necessity and Sufficiency proof of \cite{4DVSI} may be used to generalize the case where $\Lambda \neq 0$ is constant.
\begin{subsection}{Sufficiency of the Conditions}
To prove this direction of theorem \ref{thm:1} we will use the Newmann-Penrose (NP) and the compacted (GHP) formalisms given in section 4.12 of \cite{PR}; to start we introduce briefly the NP formalism before introducting the derivative operators of the GHP formalism. Throughout this paper we use a normalized spin basis $\{o^A, \iota^A \}$ such that $o^A \iota_A = 1$ and $o^A o_A = \iota^A \iota_A = 0$. From this we may build the corresponding tetrad:
 \beq & \ell^a \leftrightarrow o^A \spc{o}{A},~~n^a \leftrightarrow \iota^A \spc{\iota}{A},~~m^a \leftrightarrow o^A \spc{\iota}{A},~~\bar{m}^a \leftrightarrow \iota^A \spc{o}{A}, & \eeq
\noindent with the usual non-zero scalar products $-\ell_a n^a = m^a \bar{m}_a =1$. 
The spinorial form of the Riemann tensor $R_{abcd}$ is 
\beq  R_{abcd} & \leftrightarrow & \chi_{ABCD}\bar{\epsilon}_{\dot{A}\dot{B}}\bar{\epsilon}_{\dot{C}\dot{D}} + \bar{\chi}_{\dot{A} \dot{B} \dot{C} \dot{D} } \epsilon_{AB} \epsilon_{CD} \nonumber \\
&+& \Phi_{AB\dot{C}\dot{D}}\bar{\epsilon}_{\dot{A}\dot{B}}\epsilon_{CD} + \bar{\Phi}_{\dot{A} \dot{B}     CD}\epsilon_{AB}\bar{\epsilon}_{\dot{C} \dot{D}} \label{RiemTensor} \eeq
\noindent where
\beq \chi_{ABCD} = \Psi_{ABCD} + \Lambda(\epsilon_{AC} \epsilon_{BD} + \epsilon_{AD} \epsilon_{BD}) \label{Riemann} \eeq
\noindent and $\Lambda = R/24$ with $R$ the Ricci scalar. The Weyl spinor $\Psi_{ABCD} = \Psi_{(ABCD)}$ is related to the Weyl tensor by
\beq C_{abcd} = \Psi_{ABCD}\bar{\epsilon}_{\dot{A}\dot{B}}\bar{\epsilon}_{\dot{C}\dot{D}} + \bar{\Psi}_{\dot{A} \dot{B} \dot{C} \dot{D} } \epsilon_{AB} \epsilon_{CD}. \label{Weyl} \eeq 
\noindent Taking projections of this tensor onto the basis spinors $o^A, \iota^A$ give five complex scalar quantities $\Psi_i, i\in [0,4]$. Similarly the Ricci Spinor $\Phi_{AB\dot{C}\dot{D}} = \Phi_{(AB)(\dot{C}\dot{D})} = \bar{\Phi}_{\dot{A} \dot{B} CD}$ is connected to the traceless Ricci tensor $S_{ab} = R_{ab} - \frac14 R g_{ab}$ 
\beq \Phi_{AB\dot{C}\dot{D}} \leftrightarrow - \frac12 S_{ab}. \label{Ricci} \eeq 
\noindent We denote the projections of  $\Phi_{AB\dot{C}\dot{D}}$ onto $o^A, \iota^A$ by $\Phi_{00} = \bar{\Phi}_{00}$, $\Phi_{01} = \bar{\Phi}_{10}$, $\Phi_{02} = \bar{\Phi}_{20}$, $\Phi_{11} = \bar{\Phi}_{11}$, $\Phi_{12} = \bar{\Phi}_{21}$ and $\Phi_{22} = \bar{\Phi}_{22}$

The analytic expressions of theorem \ref{thm:1} $(1),(2)$ imply 
\beq \Psi_{ABCD} &=& \Psi_4 o_A o_B o_C o_D - 4 \Psi_3 o_{(A}o_B o_C \iota_{D)}, \label{myweyl} \\
\Phi_{AB \dot{C} \dot{D}} &=& \Phi_{22} o_A o_B \spcd{o}{C} \spcd{o}{D} -2 \Phi_{12} \iota_{(A}o_{B)} \spcd{o}{(C} \spcd{o}{D)} -2 \Phi_{21} o_{(A}o_{B)} \spcd{\iota}{(C} \spcd{o}{D)}. \label{myricci} \eeq
\noindent Following the convention used in \cite{4DVSI} we will say a scalar $\eta$ is a weighted quantity of type $\{p,q \}$ if for every non-vanishing scalar field $\lambda$, a transformation of the form
\beq o^A \mapsto \lambda o^A , \iota^A \mapsto \lambda^{-1} \iota^A, \nonumber \eeq
\noindent representing a boost in the $\ell^a - n^a$ plane and a spatial rotation in the $m^a - \bar{m}^a$ plane transformations $\eta$ in the following manner
\beq \lambda^p \bar{\lambda}^q \eta \nonumber \eeq
\noindent The boost weight, b, of a weighted quantity is defined by $b = \frac12 (p+q)$. 

The frame derivatives are defined as 
\beq & D = \ell^a \nabla_a = o^A \spc{o}{A} \nabla_{A \dot{A}}, \delta = m^a \nabla_a = o^A \spc{\iota}{A} \nabla_{A \dot{A}} & \nonumber \\
& D' = n^a \nabla_a = \iota^A \spc{\iota}{A} \nabla_{A \dot{A}}, \delta' = \bar{m}^a \nabla_a = \iota^A \spc{o}{A} \nabla_{A \dot{A}} & \nonumber \eeq
\noindent and so the covariant derivative may be expressed in terms of the frame,
\beq \nabla^a = \nabla^{A \dot{A}} = \iota^A \spc{\iota}{A} D + o^A \spc{o}{A} D' - \iota^A \spc{o}{A} \delta - o^A \spc{\iota}{A} \delta'. \nonumber \eeq
\noindent The GHP formalism introduces  new derivative operators $\eth, \th , \eth'$ and $\th'$ which are additive and obey the Leibniz rule. By including the spin-coefficient $\beta$ in the expression for these operators,  they act on scalars, spinors and tensors $\eta$ of type $\{p,q \}$ as follows:
\beq & \th = (D + p \gamma' + q \bar{\gamma}')\eta , \eth = (\delta + p \beta + q \bar{\beta}')\eta & \label{GHPderiv} \\
& \th' = (D' - p \gamma - q \bar{\gamma})\eta , \eth' = (\delta' + p \beta' + q \bar{\beta})\eta. & \nonumber  \eeq

To show the sufficiency conditions we assume the analytic conditions of theorem \ref{thm:1} hold along with the requirement that $o^A, \iota^A$ are parallely propogated along $\ell^a$ as well. Due to \eqref{krs} we have the following relations on the spin coefficients
\beq \gamma' = \tau' = 0. \label{sufcc} \eeq 
\noindent The spin-coefficient equations, the Bianchi identities and commutator relations \cite{PR} are greatly simplified by imposing \eqref{phi0}, \eqref{psi0}, \eqref{LambdaC}. The non-trivial relations that apply to proving the theorem are:
\beq \th \tau &=& 0, \label{Sa} \\
\th \sigma' &=& 0, \label{Sb} \\
\th \rho' &=& - 2 \Lambda, \label{Sc} \\
\th \kappa' &=& \tau \th' + \tau \sigma' - \Psi_3 - \Phi_{21}, \label{Sd} \\
\th \Psi_3 &=& 0, \label{Se} \\
\th \Phi_{21} &=& 0, \label{Sf} \\
\th \Phi_{22} &=& \eth' \Phi_{21} + (\eth' - 2\tau)\Psi_3 , \label{Sg} \\
\th \Psi_4 &=& \eth' \Psi_3 + (\eth' - 2\bar{\tau})\Phi_{21}, \label{Sh} \\
\th \th' - \th' \th &=& \bar{\tau} \eth + \tau \eth' + p \Lambda + q \Lambda, \label{Si} \\ 
\th \eth - \eth \th &=& 0. \label{Sj} \eeq

To proceed we analyze the boost weights of the quantities involved in these relations. In particular we will use the idea of a balanced scalar. 
\begin{defn} \label{defn:bs} 
Given a weighted scalar $\eta$ with boost-weight $b$, we shall say it is balanced if $\th^{-b} \eta = 0$  for~ $b<0$ 
and $\eta = 0$ for $b \geq 0$. 
\end{defn}
\noindent  Many of the lemmas as given in \cite{4DVSI} follow without change despite $\Lambda$'s non-vanishing. The proof of lemma 4 requires some modification due to \eqref{Sc}. For that reason, we will state each lemma leading to the main result without proof, unless there is some required change due to $\Lambda \neq 0$:
\begin{lem} \label{lem:vsi3}
 If $\eta$ is a balanced scalar then $\bar{\eta}$ is also balanced. 
\end{lem}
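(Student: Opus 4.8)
The plan is to reduce the statement to a single commutation fact — that complex conjugation commutes with the GHP operator $\th$ — after which Definition \eqref{defn:bs} does the rest. First I would record the action of conjugation on the weights. Since the bar operation interchanges the unprimed and primed spinors $o^A \leftrightarrow \spc{o}{A}$, a scalar $\eta$ of type $\{p,q\}$ is sent to a scalar $\bar\eta$ of type $\{q,p\}$. The point that makes the two cases of the definition align is that the boost weight $b = \frac12(p+q)$ is symmetric in $p$ and $q$, so $\eta$ and $\bar\eta$ carry the \emph{same} boost weight $b$.

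Second, I would verify that $\overline{\th \eta} = \th \bar\eta$. Using $\th\eta = (D + p\gamma' + q\bar{\gamma}')\eta$, the operator $D = \ell^a\nabla_a$ is built from the real null vector $\ell^a$, so $\overline{D\eta} = D\bar\eta$; and conjugation sends $\gamma' \mapsto \bar{\gamma}'$. Conjugating $\th\eta$ therefore yields $D\bar\eta + p\bar{\gamma}'\bar\eta + q\gamma'\bar\eta$. On the other hand, applying $\th$ to $\bar\eta$ — which has type $\{q,p\}$, so the roles of $p$ and $q$ in the weight terms are swapped — gives $(D + q\gamma' + p\bar{\gamma}')\bar\eta$. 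The two expressions coincide because $p\bar{\gamma}' + q\gamma' = q\gamma' + p\bar{\gamma}'$, establishing $\overline{\th\eta} = \th\bar\eta$, and iterating gives $\overline{\th^{k}\eta} = \th^{k}\bar\eta$ for every non-negative integer $k$.

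Finally I would split on the sign of $b$ exactly as in Definition \eqref{defn:bs}. If $b \ge 0$, then $\eta$ balanced means $\eta = 0$, whence $\bar\eta = 0$; since $\bar\eta$ also has boost weight $b \ge 0$, it is balanced. If $b < 0$, then $\eta$ balanced means $\th^{-b}\eta = 0$, and because $\bar\eta$ has the same boost weight $b < 0$, the condition to check is $\th^{-b}\bar\eta = 0$; this follows from the commutation fact, since $\th^{-b}\bar\eta = \overline{\th^{-b}\eta} = 0$.

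I expect the only real obstacle to be the bookkeeping in the commutation step: one must track how conjugation swaps both the spin coefficients $\gamma' \leftrightarrow \bar{\gamma}'$ and the weight exponents $p \leftrightarrow q$, and confirm that these two swaps cancel so that $\th$ is genuinely conjugation-covariant. Once this is pinned down, together with the symmetry of $b$ under $p \leftrightarrow q$, the conclusion is immediate — which is presumably why \cite{4DVSI} records the analogue without proof, and why it survives unchanged for $\Lambda \neq 0$, as $\Lambda$ does not enter the action of $\th$ on a weighted scalar.
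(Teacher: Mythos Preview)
Your argument is correct. The paper does not actually prove this lemma: it is stated without proof, with the remark that the lemmas from \cite{4DVSI} carry over unchanged when $\Lambda \neq 0$. Your write-up supplies precisely the standard justification --- that conjugation swaps the type $\{p,q\} \mapsto \{q,p\}$ and hence preserves the boost weight $b$, and that $\th$ is conjugation-covariant because $D$ is real while the swaps $\gamma' \leftrightarrow \bar{\gamma}'$ and $p \leftrightarrow q$ cancel in \eqref{GHPderiv} --- so there is nothing to compare against and nothing to correct.
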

\begin{lem} \label{lem:vsi4}
If $\eta$ is a balanced scalar then, \beq  & \tau \eta,~\rho ' \eta,~\sigma ' \eta,~\kappa '\eta & \nonumber \\ & \th \eta,~ \eth \eta,~ \eth ' \eta,~\th ' \eta & \nonumber \eeq
\noindent are all balanced as well.
\end{lem}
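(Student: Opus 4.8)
The plan is to check the two defining conditions of a balanced scalar (Definition \ref{defn:bs}) directly for each of the eight quantities, organising the argument by the boost weight $b=\tfrac12(p+q)$ of the input $\eta$ (of type $\{p,q\}$) and of the output. In every case the verification splits into two regimes. When the output has boost weight $\geq 0$ the claim is immediate: either $b\geq 0$, so $\eta=0$ and every product and derivative of $\eta$ vanishes, or a direct count forces the output itself to vanish. The work is therefore concentrated in the regime $b<0$, where I would expand the relevant power of $\th$ applied to the output by the Leibniz rule and invoke the structural relations \eqref{Sa}--\eqref{Sj} to show that every surviving term carries a factor $\th^{\,j}\eta$ with $j\geq -b$, which vanishes since $\eta$ is balanced.

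For the multiplicative statements, for a spin coefficient $s$ of boost weight $b_s$ I would expand
\[
\th^{\,-b-b_s}(s\,\eta)=\sum_{k}\binom{-b-b_s}{k}\,(\th^{\,k}s)\,(\th^{\,-b-b_s-k}\eta),
\]
and eliminate the high-$k$ terms using the $\th$-derivatives of the coefficient. For $s=\tau$, equation \eqref{Sa} gives $\th\tau=0$, so only $k=0$ remains and the factor $\th^{\,-b}\eta=0$ finishes it; the case $s=\sigma'$ is identical via $\th\sigma'=0$ from \eqref{Sb}. The cases $s=\rho'$ and $s=\kappa'$ are exactly the modification flagged before the lemma, since $\th\rho'=-2\Lambda\neq 0$ by \eqref{Sc} and $\th\kappa'$ is non-zero by \eqref{Sd}. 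The decisive feature is the constancy of $\Lambda$, which gives $\th\Lambda=0$ and hence $\th^{2}\rho'=0$, and (after also using \eqref{Sa}, \eqref{Se}, \eqref{Sf}) $\th^{3}\kappa'=0$. The Leibniz sum therefore truncates, and because the coefficient tower dies exactly one step beyond $k=-b_s$, each of the finitely many surviving terms still retains a factor $\th^{\,j}\eta$ with $j\geq -b$, all of which vanish. The additional $\Lambda$-terms present here but absent in the $VSI$ case are precisely the ones annihilated by $\th^{\,-b}\eta=0$.

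Among the derivative statements, $\th\eta$ is immediate from Definition \ref{defn:bs}: $\th^{\,-(b+1)}(\th\eta)=\th^{\,-b}\eta=0$ when $b<-1$, and $\th\eta=0$ when $b\geq -1$. For $\eth\eta$ I would use the vanishing commutator \eqref{Sj} to slide every $\th$ through $\eth$, so that $\th^{\,-b}(\eth\eta)=\eth(\th^{\,-b}\eta)=0$. For $\eth'\eta$ I would avoid a fresh computation by conjugating: since $\eth'\eta=\overline{\eth\bar{\eta}}$, Lemma \ref{lem:vsi3} makes $\bar\eta$ balanced, the $\eth$-case makes $\eth\bar\eta$ balanced, and Lemma \ref{lem:vsi3} applied once more makes $\eth'\eta$ balanced.

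The main obstacle is $\th'\eta$, because $\th$ and $\th'$ do not commute: the commutator \eqref{Si} produces $\bar\tau\eth+\tau\eth'+(p+q)\Lambda$ acting on the intervening scalar, and the weight term changes along a string of $\th$'s. I would resolve this by induction on $-b$, the base cases $b\geq 0$ being trivial. In the inductive step I apply $\th^{\,-b}$ to the identity
\[
\th\th'\eta=\th'\th\eta+\bar\tau\,\eth\eta+\tau\,\eth'\eta+(p+q)\Lambda\,\eta
\]
obtained from \eqref{Si}. The first term equals $\th'$ applied to $\th\eta$, a balanced scalar of boost weight $b+1>b$, and so vanishes by the induction hypothesis; the two middle terms are products of $\bar\tau$ and $\tau$ (both killed by $\th$) with the already-balanced scalars $\eth\eta$ and $\eth'\eta$, hence are balanced by the multiplicative step; and the last term vanishes by $\th\Lambda=0$ together with $\th^{\,-b}\eta=0$. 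This closes the induction and completes the argument. Throughout, the delicate point is the bookkeeping of the non-zero $\th\rho'$, $\th\kappa'$ and $[\th,\th']$ contributions introduced by $\Lambda\neq 0$, and confirming that each is annihilated either by balancedness of $\eta$ or by balancedness of a scalar of strictly higher boost weight.
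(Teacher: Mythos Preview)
Your argument is correct and follows essentially the same strategy as the paper: Leibniz expansion together with the truncation $\th^{2}\rho'=0$ for the $\rho'$ case, and the commutator \eqref{Si} plus induction for the $\th'$ case, with the remaining six cases handled exactly as in the $VSI$ proof. You are in fact slightly more careful than the paper, since you also flag that $\th^{2}\kappa'=-2\tau\Lambda$ need not vanish when $\Lambda\neq 0$ and verify $\th^{3}\kappa'=0$ before truncating the Leibniz sum, and your conjugation shortcut for $\eth'\eta$ via Lemma~\ref{lem:vsi3} is a clean alternative to redoing the $\eth$ computation.
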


\begin{table}[b] 
\begin{center} 
\begin{tabular}{c|c| c| c|}
&p&q & b  \\ [0.5 ex] \hline
$o^A$ & 1 & 0 & $\frac12$  \\ [0.5ex] \hline 
$\kappa$ & 3 & 1 & 2 \\ [0.5 ex]
$\sigma$ & 3 & -1 & 1 \\ [0.5 ex]
$\rho$ & 1 & 1 & 1 \\ [0.5 ex]
$\tau$ & 1 & -1 & 0 \\ [0.5 ex] \hline
$\TH$ & 1 & 1 & 1 \\ [0.5 ex]
$\eth$ & 1 & -1 & 0 \\ [0.5 ex] \hline
$\Psi_r$ & 4-2r & 0 & 2-r \\ [0.5 ex] 
 & & & \\ [0.5 ex]
\end{tabular}
\begin{tabular}{c|c| c| c}
&p&q & b  \\[0.5 ex] \hline
$\iota^A$ & -1 & 0 & $-\frac12$  \\ [0.5ex] \hline 
$\kappa'$ & -3 & -1 & -2 \\ [0.5 ex]
$\sigma'$ & -3 & 1 & -1 \\ [0.5 ex]
$\rho'$ & -1 & -1 & -1 \\ [0.5 ex]
$\tau'$ & -1 & 1 & 0 \\ [0.5 ex] \hline
$\TH'$ & -1 & -1 & -1 \\ [0.5 ex]
$\eth'$ & -1 & 1 & 0 \\ [0.5 ex] \hline
$\Phi_{rt}$ & 2-2r & 2-2t & 2-r-t \\ [0.5 ex] 
$\Lambda$ & 0 & 0 & 0 \\ [0.5 ex]
\end{tabular} \caption{Boost weights of weighted quantities}
\label{1table0} 
\end{center}
\end{table}
\begin{proof}
Let $b$ be the boost-weight of a balanced scalar $\eta$. Using table \ref{1table0} it is clear that the scalars listed in the first row have boost-weights $b,b-1,b-1,b-2$, respectively.To show these are balanced we must prove that the following must vanish: \beq & \th^{-b} (\tau \eta),~\th^{1-b}(\rho ' \eta), \th^{1-b}(\sigma ' \eta),~ \th^{2-b}(\kappa ' \eta).& \nonumber \eeq
\noindent while for the second row we require that four more quantities vanish to match their boost-weight:
\beq & \th^{-(b+1)}(\th \eta),~ \th^{-b}(\eth \eta),~ \th^{-b}(\eth ' \eta),~ \th^{-(b-1)}(\th ' \eta). & \nonumber \eeq
As the equations \eqref{Sc} and \eqref{Si} are the only that differ from the $VSI$ case, we must only check to see if two conditions still hold  \beq \th^{1-b}(\rho ' \eta) = \th^{1-b}(\th ' \eta) = 0 \nonumber \eeq
\noindent and the remaining six conditions hold automatically. The first condition follows using the Leibniz rule and equations \eqref{Sc} and the fact that $\th^2 \rho ' =0$.  Since we may expand this as
\beq & \th^{1-b}(\rho ' \eta) = \th \rho ' \th^{-b} \eta + \rho ' \th(\th^{-b} \eta), & \nonumber \eeq 
\noindent As $\eta$ is a balanced scalar for which $b<0$, these last two terms vanish. To prove the second condition, we use the commutator relation \eqref{Si} and the constancy of $\Lambda$ to get \beq  \th^{1-b}(\th ' \eta) &=& \th^{-b} (\th ' \th \eta) + \bar{\tau}(\th^{-b} \eth \eta) + \tau(\th^{-b} \eth \eta) + \th^{-b}(p \Lambda \eta + q \Lambda \eta) \nonumber \\ &=& \th^{-b} (\th ' \th \eta) \nonumber \eeq     
\noindent Using induction one may show that $\th^{1-b} \th ' \eta = \th ' \th^{1-b} \eta = 0$.
\end{proof}
\begin{lem} \label{lem:vsi5}
 If $\eta_1, \eta_2$ are balanced scalars both of type $\{ p,q\}$ then $\eta_1 + \eta_2$ is a balanced scalar of type $\{ p,q \}$ as well.
\end{lem}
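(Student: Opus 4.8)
The plan is to observe that the lemma is essentially a statement about linearity, so the proof reduces to unwinding Definition~\eqref{defn:bs} and invoking the additivity of the operator $\th$. First I would check that $\eta_1 + \eta_2$ is again a weighted quantity of type $\{p,q\}$: under the boost and rotation $o^A \mapsto \lambda o^A$, $\iota^A \mapsto \lambda^{-1}\iota^A$, each $\eta_i$ is multiplied by the same factor $\lambda^p \bar{\lambda}^q$, so their sum transforms identically and therefore carries the common boost-weight $b = \frac12(p+q)$. This is the one structural point that must be recorded before the notion of ``balanced'' can even be applied to the sum.

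With the weight of $\eta_1 + \eta_2$ pinned down, I would split into the two cases appearing in Definition~\eqref{defn:bs}. If $b \geq 0$, then by hypothesis $\eta_1 = 0$ and $\eta_2 = 0$, hence $\eta_1 + \eta_2 = 0$ and the sum is trivially balanced. If $b < 0$, then by hypothesis $\th^{-b}\eta_1 = 0$ and $\th^{-b}\eta_2 = 0$; since $\th$ is additive, so is its $(-b)$-fold composition, and thus $\th^{-b}(\eta_1 + \eta_2) = \th^{-b}\eta_1 + \th^{-b}\eta_2 = 0$. In either case the balanced condition for weight $b$ is met.

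There is no genuine obstacle here: the only ingredient beyond the definition is the additivity of $\th$, already stated in the paragraph introducing the GHP operators, together with the elementary observation that the weight of a sum of equal-weight quantities is unchanged. The hypothesis that $\eta_1$ and $\eta_2$ share the same type $\{p,q\}$ is precisely what guarantees a common value of $b$, so that a single exponent $-b$ governs both balanced conditions; without it the two scalars could require different numbers of applications of $\th$ and the argument would not close. The entire content of the lemma is therefore that being balanced is preserved under the linear operation of addition within a fixed weight class.
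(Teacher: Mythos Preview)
Your proof is correct and is precisely the standard argument: the paper itself omits a proof for this lemma, noting that it carries over verbatim from \cite{4DVSI}, and the argument there is exactly the linearity-of-$\th$ observation you give. There is nothing to add.
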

\begin{lem} \label{lem:vsi6} 
 If $\eta_1$ and $\eta_2$ are balanced scalars then $\eta_1 \eta_2$ is also balanced.
\end{lem}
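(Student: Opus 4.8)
The plan is to reduce everything to the generalized Leibniz rule for $\th$ together with the additivity of boost-weight under multiplication. Write $\eta_1$ as a weighted scalar of type $\{p_1,q_1\}$ with boost-weight $b_1 = \frac12(p_1+q_1)$ and $\eta_2$ of type $\{p_2,q_2\}$ with $b_2 = \frac12(p_2+q_2)$. Since the product is of type $\{p_1+p_2,\,q_1+q_2\}$, its boost-weight is $b = b_1 + b_2$. First I would dispose of the degenerate cases: if either $b_i \ge 0$ then by Definition \ref{defn:bs} the corresponding factor vanishes, so $\eta_1\eta_2 = 0$ and the product is trivially balanced, since both clauses of the definition hold for the zero scalar (if $b \ge 0$ it is zero, and if $b<0$ then $\th^{-b}$ applied to it is still zero). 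Thus the substantive case is $b_1 < 0$ and $b_2 < 0$, whence $b = b_1+b_2 < 0$ and I must verify $\th^{-b}(\eta_1\eta_2) = 0$.

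For this case I would set $m_1 = -b_1 > 0$ and $m_2 = -b_2 > 0$, so that $-b = m_1 + m_2$ and, by the balanced hypothesis, $\th^{m_1}\eta_1 = 0$ and $\th^{m_2}\eta_2 = 0$. Because $\th$ is additive and obeys the Leibniz rule on weighted products, induction yields the binomial expansion
\[
\th^{m_1+m_2}(\eta_1\eta_2) = \sum_{k=0}^{m_1+m_2} \binom{m_1+m_2}{k}\,(\th^{k}\eta_1)\,(\th^{\,m_1+m_2-k}\eta_2).
\]
The key step is a pigeonhole observation: in every summand at least one of the iterated derivatives has already reached its annihilation order. Indeed, if $k \ge m_1$ then $\th^{k}\eta_1 = \th^{k-m_1}(\th^{m_1}\eta_1) = 0$, while if $k \le m_1 - 1$ then $m_1+m_2-k \ge m_2+1 > m_2$, so $\th^{\,m_1+m_2-k}\eta_2 = 0$. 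Hence each term vanishes and $\th^{-b}(\eta_1\eta_2)=0$, which is precisely the balanced condition for the product of type $\{p_1+p_2,\,q_1+q_2\}$.

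I expect the only real care to lie in the bookkeeping rather than in any deep obstruction. One must keep the two clauses of Definition \ref{defn:bs} straight — vanishing of the scalar when $b \ge 0$ versus annihilation by $\th^{-b}$ when $b < 0$ — and ensure the degenerate cases are peeled off so that the binomial argument is invoked only when both factors genuinely satisfy $\th^{m_i}\eta_i = 0$. A minor but essential point worth stating explicitly is that $\th$ does satisfy the higher Leibniz rule on weighted products: from its definition in \eqref{GHPderiv} the weight contribution $p\gamma' + q\bar{\gamma}'$ acts additively on the total weight $\{p_1+p_2,\,q_1+q_2\}$, so $\th(\eta_1\eta_2) = (\th\eta_1)\eta_2 + \eta_1(\th\eta_2)$, and the full expansion follows by induction. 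No structural input beyond Lemma \ref{lem:vsi4} is therefore needed.
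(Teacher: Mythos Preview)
Your argument is correct: the additivity of boost-weight under multiplication, the trivial case when either factor has $b_i \ge 0$, and the binomial/pigeonhole step for the genuine case $b_1,b_2<0$ are exactly what is required, and your justification that $\th$ satisfies the Leibniz rule on weighted products is the only technical ingredient.

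The paper itself does not supply a proof of this lemma; it is one of those explicitly stated ``without proof'' and deferred to \cite{4DVSI}, on the grounds that the argument is unaffected by allowing $\Lambda\neq 0$. Your proof is essentially the standard one from that reference, so there is nothing to compare beyond noting that you have reproduced it. One small remark: your closing sentence that ``no structural input beyond Lemma~\ref{lem:vsi4} is needed'' is slightly misleading, since Lemma~\ref{lem:vsi4} is not actually invoked anywhere in your argument --- the proof rests solely on the Leibniz rule for $\th$ and Definition~\ref{defn:bs}.
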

\begin{defn} \label{def:vsi7}
 A balanced spinor is a weighted spinor of type $\{ 0,0 \}$ whose components are all balanced scalars. 
\end{defn}
\begin{lem} \label{lem:vsi8}
 If $S_1$ and $S_2$ are balanced spinors then $S_1 S_2$ is also a balanced spinor 
\end{lem}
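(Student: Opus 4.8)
The plan is to verify directly the two defining properties of a balanced spinor given in Definition \ref{def:vsi7}: that the product $S_1 S_2$ is a weighted spinor of type $\{0,0\}$, and that each of its dyad components is a balanced scalar. These two checks will be carried out in turn, leaning entirely on the componentwise lemmas already established.

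First I would dispose of the type condition, which is immediate. Since $S_1$ and $S_2$ are each of type $\{0,0\}$, under the boost and spatial rotation $o^A \mapsto \lambda o^A$, $\iota^A \mapsto \lambda^{-1}\iota^A$ neither factor acquires any power of $\lambda$ or $\bar{\lambda}$. Because the spinor product combines the transformation factors multiplicatively, $S_1 S_2$ transforms by $\lambda^0 \bar{\lambda}^0$ and is therefore of type $\{0,0\}$ as well. Next comes the substantive step: showing that every dyad component of $S_1 S_2$ is balanced. Expanding $S_1$ and $S_2$ in the spin basis $\{o^A,\iota^A\}$, together with its conjugate for the dotted indices, each component obtained by projecting all free indices of the outer product onto $o^A$ or $\iota^A$ is simply the product of a single component of $S_1$ with a single component of $S_2$. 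By Definition \ref{def:vsi7} these factors are balanced scalars, so by Lemma \ref{lem:vsi6} their product is balanced. When the product $S_1 S_2$ also involves contraction of spinor indices against $\epsilon_{AB}$ or $\bar{\epsilon}_{\dot{A}\dot{B}}$, the corresponding component becomes a finite sum of such products; each summand is balanced by Lemma \ref{lem:vsi6}, and by repeated application of Lemma \ref{lem:vsi5} the sum is again balanced. Any complex conjugation entering the dotted components preserves the balanced property by Lemma \ref{lem:vsi3}.

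The one point demanding care is the bookkeeping of weights under contraction, since Lemma \ref{lem:vsi5} applies only to summands sharing a common type $\{p,q\}$. I expect this to be the main, though modest, obstacle. It is resolved by observing that $\epsilon_{AB}$ pairs an $o$-projection with an $\iota$-projection in a boost-weight-neutral fashion, so the several products produced by a single contraction all carry the same $\{p,q\}$, making Lemma \ref{lem:vsi5} applicable. Granting this, the three lemmas \ref{lem:vsi3}, \ref{lem:vsi5} and \ref{lem:vsi6} combine to show that all components of $S_1 S_2$ are balanced scalars, and together with the type computation this establishes that $S_1 S_2$ is a balanced spinor.
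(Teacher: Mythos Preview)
The paper does not supply a proof of this lemma. It says explicitly that the lemmas carried over from \cite{4DVSI} which are unaffected by $\Lambda\neq 0$ are stated without proof, and Lemma~\ref{lem:vsi8} is one of those; only Lemma~\ref{lem:vsi4} is re-proved because of the modified equations \eqref{Sc} and \eqref{Si}. So there is nothing in the present paper to compare against beyond the implicit claim that the argument of \cite{4DVSI} goes through unchanged.

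Your argument is correct and is precisely the direct verification one expects: the type $\{0,0\}$ condition is multiplicative, and the dyad components of the outer product $S_1S_2$ are products of components of $S_1$ and of $S_2$, hence balanced by Lemma~\ref{lem:vsi6}. One small remark: in the paper's usage, $S_1S_2$ is most naturally the outer (tensor) product, so your extra discussion of contractions with $\epsilon_{AB}$ and the appeal to Lemma~\ref{lem:vsi5} is not strictly needed here---full contractions are treated separately in Lemma~\ref{lem:vsi10} and Lemma~\ref{lem:CSIL1}. Including it does no harm, and your observation that a contraction produces summands of a common type $\{p,q\}$ (so that Lemma~\ref{lem:vsi5} applies) is the right justification if one wants to cover that case as well.
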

\begin{lem} \label{lem:vsi9}
 A covariant derivative of an arbitrary order of a balanced sinpor S is again a balanced spinor
\end{lem}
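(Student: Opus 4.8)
The plan is to reduce the statement to the closure properties already collected in Lemmas \ref{lem:vsi3}--\ref{lem:vsi5}, by decomposing the covariant derivative into its action on the scalar components of $S$ and on the dyad. First I would observe that, since $\nabla_{A\dot A}$ is a genuine spinorial operation carrying abstract indices, the covariant derivative of a type $\{0,0\}$ spinor is again an object with abstract indices only and hence is itself of type $\{0,0\}$. By Definition \ref{def:vsi7} it therefore suffices to prove that every dyad component of $\nabla S$ is a balanced scalar.

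To extract those components I would feed $S$ into the frame expansion of $\nabla^a$ recorded above and project the result onto $o^A,\iota^A$ and their conjugates. Because the GHP operators $\th,\th',\eth,\eth'$ already absorb the connection pieces $\beta,\beta',\gamma,\gamma'$ (and their conjugates) via \eqref{GHPderiv}, differentiating a component of $S$ produces exactly the GHP derivative of that component, with the correct compensating weight shift, together with terms in which a single spin coefficient multiplies a component of $S$. The spin coefficients that can occur are precisely those governing the transport of the dyad, namely $\kappa,\sigma,\rho,\tau,\kappa',\sigma',\rho',\tau'$ and their complex conjugates.

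The decisive simplification is supplied by the gauge. By \eqref{krs} and \eqref{sufcc} we have $\kappa=\sigma=\rho=0$ and $\gamma'=\tau'=0$, so the only surviving transport coefficients are $\tau,\rho',\sigma',\kappa'$ together with $\bar\tau,\bar\rho',\bar\sigma',\bar\kappa'$. Consequently each dyad component of $\nabla S$ is a finite sum whose summands are of the form $\th\eta,\,\eth\eta,\,\eth'\eta,\,\th'\eta$ or $\tau\eta,\,\rho'\eta,\,\sigma'\eta,\,\kappa'\eta$ (possibly with a conjugated spin coefficient), where each $\eta$ is a balanced component of $S$. Lemma \ref{lem:vsi4} guarantees that each GHP derivative and each of the four spin-coefficient products is balanced; Lemma \ref{lem:vsi3} disposes of the conjugated coefficients, since $\overline{\tau\eta}=\bar\tau\bar\eta$ and likewise for the others is balanced; and Lemma \ref{lem:vsi5} shows the sum of balanced scalars of a common type is balanced. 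Hence every component of $\nabla S$ is balanced, and $\nabla S$ is a balanced spinor.

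Finally I would promote this from first order to arbitrary order by induction on the number of derivatives: the base case is the hypothesis that $S$ is balanced, and the inductive step applies the first-order result just proved to the balanced spinor $\nabla^{(k)}S$ to conclude that $\nabla^{(k+1)}S=\nabla\!\left(\nabla^{(k)}S\right)$ is balanced. The main obstacle, and the only point where the presence of $\Lambda\neq 0$ must be checked, is the bookkeeping that verifies the gauge conditions really do eliminate every transport coefficient outside the list $\{\tau,\rho',\sigma',\kappa'\}$ handled by Lemma \ref{lem:vsi4}; once this is confirmed the conclusion is immediate.
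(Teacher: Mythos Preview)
Your proposal is correct and follows essentially the same route as the paper: expand $\nabla^{A\dot A}$ in the NP frame, note the result is type $\{0,0\}$, and reduce every dyad component to a sum of the eight building blocks handled by Lemma~\ref{lem:vsi4}, then assemble with Lemmas~\ref{lem:vsi3} and~\ref{lem:vsi5}. The paper's version is simply terser, leaving the bookkeeping about which spin coefficients survive implicit in the phrase ``by virtue of how $\th,\eth,\th',\eth'$ act on the basis vectors,'' whereas you spell out that \eqref{krs} and \eqref{sufcc} kill $\kappa,\sigma,\rho,\tau'$ and make the induction on the order of the derivative explicit. One small wording slip: to dispose of a term like $\bar\tau\,\eta$ with $\eta$ balanced you should argue $\bar\eta$ balanced $\Rightarrow \tau\bar\eta$ balanced $\Rightarrow \overline{\tau\bar\eta}=\bar\tau\,\eta$ balanced, rather than looking at $\overline{\tau\eta}=\bar\tau\bar\eta$; the intent is clear and the repair is immediate.
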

\begin{proof}
 Applying the covariant derivative to a balanced spinor $S$, \beq \nabla^a = \nabla^{A \dot{A}} = \iota^A \spc{\iota}{A} D + o^A \spc{o}{A} D' - \iota^A \spc{o}{A} \delta - o^A \spc{\iota}{A} \delta'. \nonumber \eeq
\noindent From table 1 in \cite{4DVSI} it follows that $\nabla^{A\dot{A}}S$ is a weighted spinor of type $\{ 0,0 \}$. By virtue of how $\th, \eth, \th'$ and $\eth'$ act on the basis vectors, the components may be shown to be balanced scalars using lemmas \ref{lem:vsi3},  \ref{lem:vsi4} and \ref{lem:vsi5}.	
\end{proof}
\begin{lem} \label{lem:vsi10} 
 A scalar constructed as a contraction of a balanced spinor is equal to zero. 
\end{lem}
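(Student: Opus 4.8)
The plan is to reduce the claim to the boost-weight-zero clause of Definition~\eqref{defn:bs}. Any scalar built by fully contracting a spinor is an invariant, hence a weighted quantity of type $\{0,0\}$ and boost weight $b=0$. By Definition~\eqref{defn:bs} a balanced scalar with $b\geq 0$ vanishes identically, so it suffices to show that the contracted scalar decomposes into dyad components of the balanced spinor that are themselves of type $\{0,0\}$; each such component is then a balanced scalar and therefore zero. Note that only the $b\geq0$ half of Definition~\eqref{defn:bs} will be used, so the argument is purely algebraic and requires none of the $\th^{-b}\eta=0$ machinery.

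First I would expand the full contraction in the dyad $\{o^A,\iota^A\}$ together with its conjugate $\{\bar o^{\dot A},\bar\iota^{\dot A}\}$. Every pair of contracted undotted indices is saturated with $\epsilon^{AB}$ and every pair of dotted indices with $\bar\epsilon^{\dot A\dot B}$; equivalently the spin-frame identity $\delta^A_{\ B}=o^A\iota_B-\iota^A o_B$ is used to resolve each trace. Because these symplectic forms are antisymmetric, each contraction couples exactly one $o$-projection to one $\iota$-projection, and one $\bar o$ to one $\bar\iota$. Hence every dyad component of the balanced spinor that survives the contraction carries equally many $o$ and $\iota$ factors, and equally many $\bar o$ and $\bar\iota$ factors.

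Reading the weights off Table~\eqref{1table0}, such a surviving component therefore has $p=0$ and $q=0$, i.e. it is a weighted scalar of type $\{0,0\}$ with boost weight $b=0$. Since the spinor is balanced, each of its components is a balanced scalar by Definition~\eqref{def:vsi7}; being at $b=0\geq 0$, each surviving component vanishes by Definition~\eqref{defn:bs}. The contracted scalar is a finite $\pm1$ combination of these components, all of the single type $\{0,0\}$ so that Lemma~\eqref{lem:vsi5} keeps us inside one type, and it is consequently zero. This applies uniformly to composite balanced spinors, since Lemmas~\eqref{lem:vsi8} and \eqref{lem:vsi9} guarantee that products and covariant derivatives of balanced spinors are again balanced spinors, so that an arbitrary polynomial curvature invariant is realised as the full trace of a single balanced spinor.

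The only delicate point, and the step I expect to be the main obstacle, is the combinatorial bookkeeping of the middle paragraph: one must verify that the symplectic (or Kronecker) contraction forces $p=q=0$ on every surviving term individually, not merely on the total, so that no component of strictly positive or strictly negative boost weight can leak into the sum. Once the identity that $\epsilon^{AB}$ pairs an $o$ with an $\iota$ is applied carefully to each contracted index pair, the vanishing is immediate.
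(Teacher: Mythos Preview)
Your argument is correct and is precisely the standard one; the paper itself omits the proof of this lemma, deferring to \cite{4DVSI}, where the same observation is made that every dyad component surviving a full $\epsilon$-contraction of a type-$\{0,0\}$ spinor is forced to have boost weight zero and therefore vanishes by Definition~\ref{defn:bs}. Your identification of the pairing $\epsilon^{AB}=o^A\iota^B-\iota^A o^B$ as the mechanism that pins each term to $p=q=0$ is exactly the combinatorial core of the lemma.
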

From table 1 in \cite{4DVSI}, and equations \eqref{Se}- \eqref{Sh} it follows that the Weyl spinor and Ricci spinor and their complex conjugates are balanced spinors (lemma \ref{lem:vsi3}). Their product and covariant derivatives of arbitrary orders are balanced spinors as well (lemmas \ref{lem:vsi8} and \ref{lem:vsi9}). At this point to prove the sufficiency of the conditions of  theorem \ref{thm:1} we must state two more results:
\begin{lem} \label{lem:CSIL0}
The product of a balanced spinor and a weighted constant of type $\{0,0 \}$ is a balanced spinor.
\end{lem}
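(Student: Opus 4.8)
The plan is to verify that multiplying a balanced spinor $S$ by a weighted constant $c$ of type $\{0,0\}$ preserves balancedness, by reducing the claim to the scalar level and invoking the Leibniz rule for $\TH$. Since $S$ is balanced of type $\{0,0\}$ and $c$ has type $\{0,0\}$, the product $cS$ is again a weighted spinor of type $\{0,0\}$, so the type condition required by Definition \eqref{def:vsi7} is immediate. It then suffices to show that each component of $cS$ is a balanced scalar. Writing a generic component as $c\,\eta$, where $\eta$ is a balanced scalar of some boost weight $b$, I would observe that because $c$ has type $\{0,0\}$ it carries boost weight $0$, so $c\,\eta$ has the same boost weight $b$ as $\eta$; thus I must show $\TH^{-b}(c\,\eta)=0$ for $b<0$ and $c\,\eta=0$ for $b\geq 0$.

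The case $b\geq 0$ is trivial: $\eta=0$ forces $c\,\eta=0$. For $b<0$ the key input is that $c$ is a \emph{constant}, so that $\TH\,c = 0$ when $c$ is acted on as a type-$\{0,0\}$ quantity (the GHP derivative \eqref{GHPderiv} reduces to the directional derivative $D$ on a type-$\{0,0\}$ scalar, and $Dc=0$ for a constant). First I would apply the Leibniz rule repeatedly to expand $\TH^{-b}(c\,\eta)$; every term in the binomial expansion contains a factor $\TH^{k}c$ with $k\geq 1$ multiplying some power of $\TH$ applied to $\eta$, except the single term $c\,\TH^{-b}\eta$ in which all derivatives land on $\eta$. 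All the former terms vanish because $\TH^{k}c=0$ for $k\geq 1$, and the last term vanishes because $\eta$ is balanced so $\TH^{-b}\eta=0$. Hence $\TH^{-b}(c\,\eta)=0$, establishing that $c\,\eta$ is balanced.

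Having shown every component of $cS$ is a balanced scalar and that $cS$ has type $\{0,0\}$, I conclude by Definition \eqref{def:vsi7} that $cS$ is a balanced spinor, which is the assertion of Lemma \eqref{lem:CSIL0}. I expect the only subtle point, and hence the main thing to state carefully, to be the justification that $\TH$ annihilates a weighted constant of type $\{0,0\}$: one must confirm that for a genuinely constant scalar the spin-coefficient terms $p\gamma' + q\bar\gamma'$ in \eqref{GHPderiv} drop out by virtue of $p=q=0$, leaving $\TH c = Dc = 0$. Everything else is a routine application of the Leibniz property of $\TH$ together with the definition of a balanced scalar, so no lengthy computation is needed.
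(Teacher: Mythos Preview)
Your argument is correct. The paper states this lemma without proof, treating it as evident, so there is no alternative argument to compare against; your reduction to components together with the Leibniz rule and the observation that $\th c = Dc = 0$ for a type-$\{0,0\}$ constant is exactly the natural justification the paper leaves implicit.
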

\begin{lem} \label{lem:CSIL1}
A scalar constructed as a contraction from the product of a balanced spinor, $\epsilon_{AB}$, $\epsilon^{AB}$ and their conjugates is equal to zero.
\end{lem}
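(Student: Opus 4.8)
The plan is to reduce Lemma \ref{lem:CSIL1} to the already-established Lemma \ref{lem:vsi10} by showing that tensoring a balanced spinor with the metric spinors produces nothing more than another balanced spinor. First I would record the relevant properties of $\epsilon_{AB}$, $\epsilon^{AB}$, $\bar\epsilon_{\dot A\dot B}$ and $\bar\epsilon^{\dot A\dot B}$. Each is a weighted spinor of type $\{0,0\}$ whose only non-vanishing dyad components are the boost-weight-zero constants $\pm 1$, and each is annihilated by every GHP operator $\th,\eth,\th',\eth'$, being covariantly constant. In the terminology of Lemma \ref{lem:CSIL0} the metric spinors are therefore weighted constants of type $\{0,0\}$, differing from the scalar $\Lambda$ only in that they carry spinor indices.

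Next I would show that the product of a balanced spinor $S$ with any of these metric spinors is again balanced. Since both factors have type $\{0,0\}$, so does the product; the only content is to verify that each dyad component of the product is a balanced scalar. Writing the components of the product as products of the dyad components of $S$ with those of the metric spinor, the surviving components are $\pm$(component of $S$), and because the surviving components of $\epsilon$ carry boost weight zero the boost weight of each surviving component of the product equals that of the corresponding component of $S$. Hence the balanced condition of Definition \ref{defn:bs} for the product is identical to the one already satisfied by $S$, using that multiplication by the constant $\pm 1$ commutes with $\th$. Iterating over the finitely many metric-spinor factors shows that the product of $S$ with $\epsilon_{AB}$, $\epsilon^{AB}$ and their conjugates is a balanced spinor; this step is exactly the promotion of Lemma \ref{lem:CSIL0} from the scalar constant $\Lambda$ to the covariantly constant metric spinors.

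With the product identified as a balanced spinor, any scalar obtained from it by full contraction vanishes by Lemma \ref{lem:vsi10}, completing the argument. I expect the main obstacle to be the index bookkeeping in the middle step: the metric spinors are used both to raise or lower and to pair indices, so an $\epsilon$ may contract two indices of the balanced spinor rather than saturating against another $\epsilon$, and one must check that such an internal contraction does not push a component off the balanced range. The boost-weight-zero observation is precisely what rules this out. As an alternative route for the same step, one may expand $\epsilon_{AB}=o_A\iota_B-\iota_A o_B$ so that every resulting term is a dyad component of $S$, hence a balanced scalar; the full contraction is then a product of balanced scalars, which is balanced by Lemma \ref{lem:vsi6}, of total boost weight zero, and therefore vanishes by Definition \ref{defn:bs}.
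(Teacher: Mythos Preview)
The paper states Lemma~\ref{lem:CSIL1} without proof, treating it as an evident companion to Lemma~\ref{lem:CSIL0}; there is no argument in the paper to compare against.

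Your main route is correct and is the natural way to fill in the omitted details: the non-vanishing dyad components of $\epsilon_{AB}$, $\epsilon^{AB}$ and their conjugates are the boost-weight-zero constants $\pm 1$, so tensoring a balanced spinor $S$ with any string of metric spinors yields a type $\{0,0\}$ spinor whose dyad components are, up to sign, dyad components of $S$ with unchanged boost weight. That product is therefore balanced, and Lemma~\ref{lem:vsi10} finishes the job. Your remark that this is exactly the spinor-indexed version of Lemma~\ref{lem:CSIL0} is on point, and your handling of the case where an $\epsilon$ contracts two indices of $S$ (absorb it into the tensor product first, then contract) is the right bookkeeping move.

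One small wording issue in your alternative route: after expanding each $\epsilon_{AB}=o_A\iota_B-\iota_A o_B$ the full contraction becomes a \emph{sum} of dyad components of $S$, not a product, so Lemma~\ref{lem:vsi6} is not the relevant tool. What you actually need there is that each surviving term is a single balanced scalar of boost weight zero (the $o$'s and $\iota$'s from the expanded $\epsilon$'s contribute zero net boost weight), and a balanced scalar with $b=0$ vanishes by Definition~\ref{defn:bs}; Lemma~\ref{lem:vsi5} then handles the sum. With that correction the alternative route is also valid.
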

With these observations, and equations \eqref{RiemTensor}, \eqref{Riemann}, \eqref{Weyl} and \eqref{Ricci}  imply that any contraction of the product of $N$ copies of the Riemann tensor with itself must vanish except for the contraction of the term built exclusively out of the product of $N$ copies of \beq \Lambda(\epsilon_{AC} \epsilon_{BD} + \epsilon_{AD} \epsilon_{BD} \bar{\epsilon}_{\dot{A} \dot{B}}\bar{\epsilon}_{\dot{C}\dot{D}}). \nonumber \eeq
\noindent Lemma \ref{lem:CSIL0} ensures all other terms are the products of balanced spinors, $\epsilon$'s and $\bar{\epsilon}$'s; these terms must vanish when contracted by lemmas \ref{lem:vsi10} and \ref{lem:CSIL1}.  To show that all non-zero curvature invariants appear at zeroth order, we note that the $n^{th}$ covariant derivative of the Riemann tensor is a balanced spinor for $n>0$, as $\nabla \epsilon_{AB} = 0$ and $\Lambda$ is a constant. Thus any product of the Riemann tensor with its $n^{th}$ covariant derivative must vanish upon contraction by lemma \ref{lem:CSIL1}, while any contraction of the product of the $n^{th}$ and $m^{th}$ covariant derivative of the Riemann tensor must vanish necessarily by lemma \ref{lem:vsi10}.  
\end{subsection}
\begin{subsection}{Necessity of the Conditions}
To show that these conditions are necessary follows by repeating the proof from \cite{4DVSI} verbatim, this can be done because the particular Newman Penrose equations used and the Bianchi Identities do not involve $\Lambda$, or the derivatives of $\Lambda$ - since they vanish if $\Lambda$ is constant. Requiring that all invariants vanish except those constructed as polynomials of $\Lambda$ which are assumed to be constant, one may prove conditions $(1)$ and $(2)$ of theorem \ref{thm:1} hold. 
\end{subsection}
\end{section}

\begin{section}{Local description of all $CSI_\Lambda$ spacetimes}
To the author's knowledge, an explicit list of metrics has yet to be given for the $CSI_{\Lambda}$  spacetimes However, portions of the $CSI_{\Lambda}$ spacetimes have been studied under other pretenses. For example, the whole of the $CSI_{\Lambda}$ spacetimes with $\Lambda = 0$ are known; these are the $VSI$ spacetimes \cite{4DVSI}. 

For non-zero $\Lambda$ we list known examples by Petrov Type.  The plane-fronted gravitational waves constitute all of the Petrov type N $CSI_\Lambda$ spacetimes. Metrics for these spacetimes were found in \cite{Ozvath} and these were further classified into canonical forms in \cite{Bicak}. As a generalization of the Kundt waves, all Petrov Type III $CSI_\Lambda$ spacetimes admitting pure radiation ($\Phi_{12} = 0$) were listed in \cite{GDP2004}. Lastly in the case of Petrov Type O, all spacetimes with Segre type $\{(2,11)\}$ \cite{Invskea} have been invariantly classified, implying that the $CSI_{\Lambda}$ spacetimes in this subcase are all known. 

We concentrate on the complete list of all CSI spacetimes given in \cite{4DKundt}, by imposing the necessary conditions on the curvature scalars \cite{PR} we can identify those $CSI$ metrics which belong to the $CSI_{\Lambda}$ case. We may interpret the vanishing or non-vanishing of $\Phi_{22}, \Phi_{02}$ and $\Phi_{20}$ in terms of the Segre type, \cite{Zakhary}, to determine the metric forms permitted for the $CSI_\Lambda$ spacetimes in \cite{4DKundt}: $\{(1,111)\}, \{(2,11)\}$ and $ \{(3,1)\}$. Employing Kundt coordinates, any $CSI$ spacetime may be expressed as
\beq ds^2 = 2du[dv+H(v,u,x^k)du+W_i(v,u,x^k) dx^i] + g_{ij}(x^k)dx^i dx^j \nonumber \eeq
\noindent where $dS^2_H = g_{ij}(x^k)dx^idx^j$ is the locally homogeneous metric of the transverse space and the metric functions $H$ and $W_i$ are functions of the form
\beq W_i(v,u,x^k) &=& vW_i^{(1)}(u,x^k)+W_i^{(0)}(u,x^k), \nonumber \\
H(v,u,x^k) & = & v^2 \tilde{\sigma} + vH^{(1)}(u,x^k)+H^{(0)}(u,x^k), \label{CSImf} \\
\tilde{\sigma} &=& \frac18 (4\sigma + W^{i(1)}W^{(1)}_i), \nonumber \eeq
\noindent where $\sigma$ is a constant. 

As the transverse space must be a locally homogeneous two-dimensional space, up to scaling, there are (locally) only the sphere $S^2$, flat space and the Hyperbolic plane $\mathbb{H}^2$. Exploiting this fact we list all of the $CSI_\Lambda$ spacetimes by the transverse metric and the one-form ${\bf W}^{(1)} = W_i^{(1)}dx^i$. The constant $\sigma$ will be specified in each case. 

\begin{subsection}{The Sphere $S^2$}
For those metrics with Segre type  $\{(1,111)\}, \{(2,11)\}$ and $ \{(3,1)\}$, one must have $\sigma>0$ and the transverse metric expressed as 
\beq ds^2_S = dx^2 + \frac{1}{\sigma} sin^2(\sqrt{\sigma}x)dy^2, \nonumber \eeq
\noindent where
\begin{enumerate}
\item ${\bf W}^{(1)} = 2\sqrt{\sigma}tan(\sqrt{\sigma} x)dx$. 
\item ${\bf W}^{(1)} = 2\sqrt{\sigma}[cot(\sqrt{\sigma} x)dx+tan(\sqrt{\sigma}y)dy]$. 
\item ${\bf W}^{(1)} = 2\sqrt{\sigma}[cot(\sqrt{\sigma} x)dx-cot(\sqrt{\sigma}y)dy]$. 
\end{enumerate}
\noindent Depending on the form of $H^{(1)}$, $H^{(0)}$and $W^{(0)}$ these are of Petrov type III,N or O.
\end{subsection}
\begin{subsection}{The Euclidean plane $\mathbb{E}^2$}
For those metrics with Segre type  $\{(1,111)\}, \{(2,11)\}$ and $ \{(3,1)\}$, the transverse metric will be
\beq ds^2_S = dx^2 + dy^2, \nonumber \eeq
\noindent where the $\sigma$ and the one-form are now
\begin{enumerate}
\item $\sigma = 0$, ${\bf W}^{(1)} = \frac{2\epsilon}{x}dx$, where $\epsilon = 0,1$.  
\end{enumerate}
\noindent This is the VSI case and so these are of Petrov type III, N or O. 
\end{subsection}
\begin{subsection}{The Hyperbolic plane $\mathbb{H}^2$}
Segre type for these metrics are $\{(1,111)\}, \{(2,11)\}$ and $ \{(3,1)\}$. We require that $\sigma<0$ and set $\sigma = -q^2$, depending on the case we will use different coordinates for the Hyperbolic plane.
\begin{enumerate}
\item $ds^2 = dx^2 + e^{-2qx} dy^2,~~ {\bf W}^{(1)} = 2qdx+\frac{2\epsilon}{y}dy$, where $\epsilon = 0,1$.
\item $ds^2 = dx^2 + \frac{1}{q^2} sinh^2(qx) dy^2,~~ {\bf W}^{(1)} = -2qtanh(qx) dx$.
\item $ds^2 = dx^2 + \frac{1}{q^2} sinh^2(qx) dy^2,~~ {\bf W}^{(1)} = 2q[coth(qx)dx - tanh(qy) dy]$.
\item $ds^2 = dx^2 + \frac{1}{q^2} cosh^2(qx) dy^2,~~ {\bf W}^{(1)} = 2q coth(qx) dx$.
\item $ds^2 = dx^2 + \frac{1}{q^2} cosh^2(qx) dy^2,~~ {\bf W}^{(1)} = 2q[-tanh(qx)dx + coth(qy) dy]$.
\end{enumerate}
For all of these cases the Petrov type is III, N or O. 
\end{subsection}
\end{section}
\begin{section}{Discussion}
In this paper the results of \cite{4DVSI} were extended to produce an invariant characterization of the class of spacetimes whose non-zero scalar curvature invariants are expressed as polynomials of $\Lambda$ ($CSI_{\Lambda}$ spacetimes) by determining necessary and sufficient conditions on the Newman-Penrose curvature scalars \cite{PR}. Then by employing the exhaustive list of metrics for the $CSI$ spacetimes \cite{4DKundt} all of the $CSI_{\Lambda}$ spacetimes are found by comparing Segre Type, and the sign of the cosmological constant $\Lambda$. These invariants determine the metrics whose range of Petrov types: $III,N,$ and $O$ which complete the necessary and sufficient conditions for these spacetimes to be $CSI_{\Lambda}$.

In this sense all of the $CSI_{\Lambda}$ spacetimes have been given a local description in terms of a metric, however the conditions on the metrics for the subcases arising from the Petrov classification are not known. Equivalently, the interconnection between the Cartan invariants and the specialization in Petrov type $III \to N \to  O$ is unknown.  In order to answer such a question one must apply the Karlhede algorithm to the entirety of the $CSI_{\Lambda}$ metrics. Such a task which would be an effort to implement but entirely plausible; recently all vacuum Petrov type N $VSI$ spacetimes have been invariantly classified using the Cartan invariants arising from the application of the  Karlhede algorithm  to the vacuum PP-waves \cite{Milson}, and the vacuum Kundt waves \cite{ McNutt}.  

Such a classification would give insight into the application of the Karlhede algorithm to $CSI$ spacetimes, which is an important question related to the equivalence problem for these spacetimes. Furthermore the collection of $CSI$ metrics have higher-dimensional analogues  \cite{hdcoley, HDVSI, Kundt}.  By comparing the four dimensional subcases with their higher dimensional counterparts it is hoped an analogue of the Karlhede algorithm could be implemented for all $CSI_{\Lambda}$ spacetimes. 
\end{section}

\end{document}